\numberwithin{equation}{section}
\newtheorem{theorem}{Theorem}[section]
\newtheorem{definition}{Definition}[section]
\newcommand{\goto}{\rightarrow}
\newcommand{\V}{\mathbf{V}}
\newcommand{\transpose}{^{\tiny \top}}
\newcommand{\one}{\mathbf{1}}
\newcommand{\tree}{\mathbb{T}}
\newcommand{\TV}{\text{TV}}
\newcommand{\var}{\mathrm{Var}}
\newcommand{\cov}{\mathrm{Cov}}
\newcommand{\Ef}{{\mathrm{I}\!\mathrm{E}}}
\newcommand{\mb}{\mathbf}
\newcommand{\mbb}{\mathbb}
\newcommand{\mc}{\mathcal}
\begin{document}
	
\title{\textbf{When can we reconstruct the ancestral state? Beyond Brownian motion}}

\author{\small\sc Nhat L. Vu$^a$\footnote{These authors contributed equally to this work.}, Thanh P. Nguyen$^{b,c,d}$,\textsuperscript{*} Binh T. Nguyen$^{b,c,d}$, Vu Dinh$^{e}$, Lam Si Tung Ho$^{a}$}
\date{}
\maketitle
\thispagestyle{empty}

\begin{center}
\textit{\small a-Department of Mathematics and Statistics, Dalhousie University, Halifax, Nova Scotia, Canada\\
b-AISIA Research Lab, Ho Chi Minh City, Vietnam\\
c-Department of Computer Science, University of Science, Ho Chi Minh City, Vietnam\\
d-Vietnam National University, Ho Chi Minh City, Vietnam\\
e-Department of Mathematical Sciences, University of Delaware, USA}
\end{center}

\hrule
\begin{abstract}
Reconstructing the ancestral state of a group of species helps answer many important questions in evolutionary biology.
Therefore, it is crucial to understand when we can estimate the ancestral state accurately.
Previous works provide a necessary and sufficient condition, called the big bang condition, for the existence of an accurate reconstruction method under discrete trait evolution models and the Brownian motion model.
In this paper, we extend this result to a wide range of continuous trait evolution models. 
In particular, we consider a general setting where continuous traits evolve along the tree according to stochastic processes that satisfy some regularity conditions. 
We verify these conditions for popular continuous trait evolution models including Ornstein-Uhlenbeck, reflected Brownian Motion, bounded Brownian Motion, and Cox-Ingersoll-Ross.

	\medskip
	\textbf{Keywords:} ancestral state reconstruction, consistency, big bang condition, Ornstein-Uhlenbeck, Brownian motion, Cox-Ingersoll-Ross.
\end{abstract}

\section{Introduction}

The main task of the ancestral state reconstruction problem is to estimate the trait value of the most recent common ancestor of a group of species from their observed trait values.
This problem has many applications in ecology and evolution, such as reconstructing ancestral diets \citep{maritz2021repeated}, studying female song in songbirds \citep{odom2014female}, and inferring the origin of infectious disease epidemics \citep{faria2014early, gill2017relaxed}.
Moreover, ancestral state reconstruction methods have been applied to reconstruct the place of origin of a language family \citep{bouckaert2012mapping, neureiter2021can}.
Therefore, it is important to assess whether the ancestral state can be reconstructed with high certainty.
In particular, we are interested in whether reconstruction methods converge to the true ancestral trait value (in probability) as the number of sampled species increases to infinity.
This property is called \emph{consistency}, a desired characteristic for any reconstruction method.

The consistency property of ancestral state reconstruction methods has been studied previously for both discrete and continuous traits.
Since species are related to each other according to an evolutionary tree, it is natural that consistency depends heavily on the structure of this tree.
For continuous traits, \citet{ane2008analysis} derived a necessary and sufficient condition for the consistency of the Maximum likelihood estimator (MLE) of the ancestral state under the Brownian motion (BM) model.
This condition is $\one\transpose \V_n^{-1} \one \to \infty$ where $n$ is the number of sampled species, $\one$ is an $n \times 1$ vector whose elements are $1$, and $\V_n$ is an $n \times n$ matrix whose $i$th-row and $j$-th column is the distance from the root to the most recent common ancestor of leaves $i$ and $j$.
Under mild assumptions,  \citet{fan2018necessary} provided a necessary and sufficient condition, called big bang, for the existence of a consistent ancestral state reconstruction method for discrete traits.
The big bang condition is satisfied if for any positive number $\epsilon$, the number of intersection points between the circle with radius $\epsilon$ centred at the root and the tree converges to infinity as the number of sampled species increases.
In other words, the number of branches in the proximity of the root is large.
A natural question is whether the condition $\one\transpose \V_n^{-1} \one \to \infty$ and the big bang condition are equivalent.
Recently, \citet{ho2021can} gave a positive answer to this question when the sequence of trees is nested and has bounded height as the sample size increases.
In particular, they showed that both conditions are necessary and sufficient conditions for the existence of a consistent ancestral state reconstruction method under the BM model.
The result unifies the theory of ancestral state reconstruction between discrete trait evolution models and the BM model.
What remains unknown is whether this unified theory also holds for other evolution models of continuous traits.
In this paper, we will address this open question.

Researchers often model the evolution of a continuous trait along an evolutionary tree using a diffusion process.
At a node of the tree, the children lineages inherit the trait value of the parent lineage as their starting trait value and evolve independently from each other.
The simplest evolution model for continuous traits is the BM model, which only considers neutral evolution \citep{felsenstein1985phylogenies}.
Later, \citet{hansen1997stabilizing} proposed the Ornstein-Uhlenbeck (OU) model to incorporate natural selection. The BM and OU models are the most popular evolution models for continuous traits due to the existence of efficient computational methods \citep{ho2014linear}.
Recently, much effort has been made to move away from these Gaussian models \citep{boucher2016inferring, boucher2018general, blomberg2020beyond, jhwueng2020modeling}.
Following this spirit, we consider a general setting where the evolution of traits follows a general stochastic process on trees. 
We will show that under mild assumptions, the big bang condition is a necessary and sufficient condition for the existence of a consistent ancestral state reconstruction method.
Our setting includes several popular evolution models for continuous traits such as the OU model \citep{hansen1997stabilizing}, the reflected Brownian motion (RBM) model \citep{boucher2016inferring}, the bounded Brownian motion (BMM) model  \citep{boucher2016inferring}, and the Cox-Ingersoll-Ross (CIR) model \citep{lepage2006continuous, blomberg2020beyond}.

\section{Mathematical formulation}

In this paper, we consider the common setting for studying the asymptotic theory of trait evolution models where the sequence of trees $(\tree_n)_{n=1}^\infty$ is nested.
That is, $\tree_n$ is a subtree of $\tree_{n+1}$ for all $n$.
Without loss of generality, we assume that tree $\tree_n$ has $n$ species and all trees in the sequence have the same root.
We make a standard assumption that the tree topology and edge lengths of $\tree_n$ are known.
In this paper, we consider the scenario where the height of the sequence of trees $(\tree_n)_{n=1}^\infty$ is uniformly bounded from above. 
Specifically, let $t_i$ be the distance from the root to a leaf $i$.
The height of tree $\tree_n$ is defined by $h_n = \max\{t_i: i = 1, 2, \ldots, n\}$.		
Under our setting,  $h^* := \sup_n h_n < +\infty.$

For a tree $\tree$, we denote the leaf set of $\tree$ by $\partial \tree$ and the tree obtained by truncating $\tree$ at distance $s$ from the root by $\tree(s)$.
It is worth noticing that $\partial \tree(s)$ is called a cutset corresponding to time $s$ away from the root \citep{ane2017phase}.
Let $\lvert A \rvert$ be the number of elements of the set $A$.
The big bang condition \citep{fan2018necessary}  is defined as follows:

\begin{definition}[big bang condition]
A sequence of nested trees $(\tree_n)_{n=1}^\infty$ satisfies the big bang condition if $\lim_{n \to \infty} \lvert \partial \tree_n(s) \rvert = \infty$ for all $s > 0.$
\end{definition}

A layman's explanation of the big bang condition is that the number of branches in the proximity of the root is large.
We will prove that the big bang condition is the necessary and sufficient condition for the existence of a consistent ancestral state reconstruction method for a general class of continuous trait evolution models.
Throughout this paper, we assume that all other parameters of the model are known.
Specifically, we assume that the model satisfies the following regularity conditions:

	\begin{itemize}
		\item [\hypertarget{A1}{($A_1$)}]
		The trait evolves along a tree according to a time-homogeneous Markovian stochastic process $\{Y(t)\}_{t \geq 0}$ on a state-space $\mathcal{S} \subset \mbb{R}$ and there exist functions $u, v, \phi$ such that
\begin{equation}\label{mean}
		\Ef \{\phi[Y(t)] \mid Y(0)\} = u(t)\phi[Y(0)] + v(t),
		\end{equation}
		and
		\begin{equation}	\label{var}
		\var \{\phi[Y(t)] \mid Y(0) \} \leq C t, \quad \forall t \in [0,h^*],
		\end{equation}
where C is a constant that does not depend on $t$; $u, v$ are continuous functions; $\phi$ is a continuous, injective function from $\mc{S}$ to $\mbb{R}$ and the inverse function $\phi^{-1}$ is also continuous; and $u(t) > 0$ for all $t \ge 0$.
Note that $u(0) = 1$ and $v(0) = 0$.
		
		\item[\hypertarget{A2}{($A_2$)}]
		We denote the trait values at the leaves of a tree $\tree$ by $\mb{Y}_{\tree}$ and the trait value at the root by $\rho$.
		Let $P_{\rho, \tree}$ be the joint distribution of the observations $\mb Y_{\tree}$ at the leaves of the tree $\tree$ given that the ancestral state is $\rho$. 
		We assume that for any tree $\tree$, there exists $\rho_1 \ne \rho_2$ such that the overlapped support of $P_{\rho_1, \tree}$ and $P_{\rho_2, \tree}$ is not trivial.
		Here, an overlapped support $\mc{A}$ of $P_{\rho_1, \tree}$ and $P_{\rho_2, \tree}$ is trivial if $\min \{P_{\rho_1, \tree}(\mc{A}), P_{\rho_2, \tree}(\mc{A})\} = 0$.
		Furthermore, we assume that $P_{\rho, \tree}$ admits a probability density function $f_{\rho, \tree}$.
		
	\end{itemize}	
	
Condition \hypertarget{A1}{($A_1$)} ensures that the observations contain sufficient information about the root. Specifically, \eqref{mean} means that the information about the root is contained in the expected value of the observations. On the other hand, \eqref{var} controls the decay rate of the information about the root through time. Condition \hypertarget{A2}{($A_2$)} is similar to the Downstream Disjointness condition in \cite{fan2018necessary} . The main purpose of this condition is to remove trivial scenarios. If \hypertarget{A2}{($A_2$)} does not hold, $P_{\rho_1, \tree}$ and $P_{\rho_2, \tree}$ are disjoint for any $\rho_1$, $\rho_2$. In this case, it is trivial to reconstruct the ancestral state because each observed vector of values at the leaves only corresponds to one value at the root. However, that is too good to be true for practical settings. 

\section{A necessary and sufficient condition for the existence of a consistent estimator for the ancestral state}

We recall the definition of a \emph{consistent estimator}:
\begin{definition} 
An estimator $\hat{\rho}_n$ of $\rho$ is consistent if and only if for all $\epsilon > 0$, we have 
\[
P_{\rho, \tree_n}(
\lvert \hat{\rho}_n - \rho \rvert > \epsilon) \goto 0.
\]
In other words, $\hat{\rho}_n$ converges to $\rho$ in probability ($\hat{\rho}_n \to_p \rho$).
\end{definition}

Now, we are ready to state our main result:

\begin{theorem}
Assume that the regularity condition \hyperlink{A1}{$(A_1)$} and \hyperlink{A2}{$(A_2)$} are satisfied.
Then, the big bang condition is the necessary and sufficient condition for the existence of a consistent ancestral reconstruction method.
\label{thm:main}
\end{theorem}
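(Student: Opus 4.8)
The plan is to route both directions through the total variation distance between the leaf laws $P_{\rho,\tree_n}$ under distinct root values, and to exploit the Markov decomposition of the tree at a cutset together with the moment control in \hyperlink{A1}{$(A_1)$}. The first step is a reduction lemma bridging consistency and total variation: if a consistent method $\hat\rho_n$ exists, then $d_\TV(P_{\rho_1,\tree_n},P_{\rho_2,\tree_n})\to 1$ for every $\rho_1\ne\rho_2$. I would prove this by a two-point argument. Writing $r=|\phi(\rho_1)-\phi(\rho_2)|>0$, consistency and the continuity of $\phi$ give $\phi(\hat\rho_n)\to\phi(\rho)$ in $P_\rho$-probability, so the event $\{|\phi(\hat\rho_n)-\phi(\rho_1)|<r/2\}$ has probability tending to $1$ under $\rho_1$ and to $0$ under $\rho_2$, forcing $d_\TV\to 1$. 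Consequently, for necessity it suffices to exhibit a single pair with $d_\TV\not\to 1$, while for sufficiency I will build an explicit estimator rather than invoke the converse.

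For \emph{necessity}, suppose the big bang condition fails, so there is $s_0>0$ with $|\partial\tree_n(s_0)|\not\to\infty$. Because the trees are nested with a common root, refining $\tree_n$ to $\tree_{n+1}$ cannot delete a branch crossing level $s_0$, so $n\mapsto|\partial\tree_n(s_0)|$ is nondecreasing; being integer-valued, bounded, and nondecreasing, it is eventually constant, say for $n\ge N$. But then no added species crosses level $s_0$ nor terminates below it, whence the truncated tree stabilizes, $\tree_n(s_0)=\tree_N(s_0)$ for all $n\ge N$. By the Markov property, conditional on the trait values on the cutset $\partial\tree_n(s_0)$ the leaf values are produced by a kernel not depending on $\rho$; the data-processing inequality then gives $d_\TV(P_{\rho_1,\tree_n},P_{\rho_2,\tree_n})\le d_\TV$ of the cutset laws, which for $n\ge N$ is the fixed quantity attached to $\tree_N(s_0)$. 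Applying \hyperlink{A2}{$(A_2)$} to the fixed tree $\tree_N(s_0)$ furnishes a pair $\rho_1\ne\rho_2$ whose overlapped support is nontrivial, so by \eqref{eq1} this cutset distance is strictly below $1$. Hence $d_\TV\not\to 1$ for this pair and, by the reduction lemma, no consistent method can exist.

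For \emph{sufficiency}, assume the big bang condition. First, the tower property applied to \eqref{mean} yields the multiplicativity $u(t_1+t_2)=u(t_1)u(t_2)$, so the continuous positive function $u$ is bounded away from $0$ and $\infty$ on $[0,h^*]$. Fix a level $s$ and, for each of the $m=|\partial\tree_n(s)|$ cutset points, select one descendant leaf $\ell_j$; set $\hat Z_j=\big(\phi(Y_{\ell_j})-v(t_{\ell_j})\big)/u(t_{\ell_j})$, which by \eqref{mean} satisfies $\Ef[\hat Z_j\mid\rho]=\phi(\rho)$, and by \eqref{var} has $\var(\hat Z_j\mid\rho)\le C h^*/(\inf u)^2=:K$. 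The technical heart is the covariance bound. Two chosen leaves share a path of length $s_{ij}\le s$ up to their most recent common ancestor $a$; conditioning on $Y_a$ (under which $Z_i=\phi(Y_{\ell_i})$ and $Z_j$ are independent) gives $\cov(Z_i,Z_j\mid\rho)=u(t_{\ell_i}-s_{ij})\,u(t_{\ell_j}-s_{ij})\,\var(\phi[Y_a]\mid\rho)\le C' s$, using \eqref{var} and the boundedness of $u$. Dividing by $u(t_{\ell_i})u(t_{\ell_j})$ yields $\cov(\hat Z_i,\hat Z_j\mid\rho)\le C'' s$, so the average $\bar Z=m^{-1}\sum_j\hat Z_j$ obeys $\var(\bar Z\mid\rho)\le K/m+C'' s$.

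To finish, I would diagonalize: since $|\partial\tree_n(s)|\to\infty$ for each fixed $s>0$, one can pick $s_n\to 0$ with $m_n=|\partial\tree_n(s_n)|\to\infty$, giving $\var(\bar Z_n\mid\rho)\to 0$. As $\Ef[\bar Z_n\mid\rho]=\phi(\rho)$, Chebyshev's inequality yields $\bar Z_n\to\phi(\rho)$ in $P_\rho$-probability, and the continuity of $\phi^{-1}$ gives $\phi^{-1}(\bar Z_n)\to\rho$, a consistent method. I expect the covariance decay to be the main obstacle: it is where both the multiplicative structure of $u$ and the $\var\le Ct$ bound must cooperate to show that correlations among early-diverging lineages vanish as the cutset approaches the root. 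On the necessity side the delicate point is conceptual rather than computational, namely recognizing that monotonicity plus boundedness forces the truncated tree to stabilize, which is precisely what converts \hyperlink{A2}{$(A_2)$} into a uniform lower bound on the testing error.
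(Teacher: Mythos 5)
Your proposal is correct and follows essentially the same route as the paper: for necessity, the two-point argument showing consistency forces $d_\TV(P_{\rho_1,\tree_n},P_{\rho_2,\tree_n})\to 1$, combined with the observation that failure of big bang freezes the truncated tree $\tree_N(s_0)$ and the Markov/data-processing bound $d_\TV(P_{\rho_1,\tree_n},P_{\rho_2,\tree_n})\le d_\TV(P_{\rho_1,\tree_N(s_0)},P_{\rho_2,\tree_N(s_0)})<1$ via $(A_2)$; for sufficiency, the debiased average $\phi^{-1}\bigl(m^{-1}\sum_j [\phi(Y_{\ell_j})-v(t_{\ell_j})]/u(t_{\ell_j})\bigr)$ with mean $\phi(\rho)$, vanishing variance from the law of total covariance at the most recent common ancestors, and Chebyshev. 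The only deviations are cosmetic: you diagonalize over levels $s_n\to 0$ where the paper extracts a subsequence $n_k$ with $|\partial\tree_{n_k}(1/k)|\ge k$, your multiplicativity remark on $u$ is unnecessary (continuity on $[0,h^*]$ already gives the bounds), and your covariance coefficients $u(t_{\ell_i}-s_{ij})u(t_{\ell_j}-s_{ij})$ are in fact written more carefully than the paper's $u(t_{ij,k})^2$.
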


First, let us review briefly about properties of the total variation distance.
For any two probability measures $\mu_1$ and $\mu_2$, the total variation distance between them is defined as
\begin{align*}
d_\TV(\mu_1, \mu_2) &=\sup_{\mathcal{A}} \lvert \mu_1(\mathcal{A})-\mu_2(\mathcal{A}) \rvert\\
&=\frac{1}{2}\int \lvert \mu_1(x)-\mu_2(x) \rvert dx\\
&=\frac{1}{2}\int[\mu_1(x)\vee\mu_2(x)-\mu_1(x)\wedge\mu_2(x)]dx,
\end{align*}
where $a\vee b:=\max\{a,b\}$ and $a\wedge b:=\min\{a,b\}$. 
On the other hand, since
$$\frac{1}{2}\int[\mu_1(x)\vee\mu_2(x)+\mu_1(x)\wedge\mu_2(x)]dx = 1,$$
we have
\begin{equation}\label{eq1}
d_\TV(\mu_1, \mu_2) = 1 - \int\mu_1(x)\wedge\mu_2(x)dx.
\end{equation}
This implies that if the overlapped support of $\mu_1$ and $\mu_2$ is not trivial, then their total variation distance is strictly less than one.

The proof of Theorem \ref{thm:main} is divided into two parts.
In Section \ref{sub1}, we prove that the big bang condition is a necessary condition.
The main idea is to show that if the big bang condition does not hold, the total variation distance between $P_{\rho_1, \tree_n}$ and $P_{\rho_2, \tree_n}$ is bounded away from $1$ for some $\rho_1 \ne \rho_2$. 
This implies that there is no consistent estimator for the ancestral state $\rho$.
In Section \ref{sub2}, we provide a simple consistent estimator for the ancestral state when the big bang condition is satisfied.

\subsection{Necessary condition}
\label{sub1}
\begin{theorem}
Given that the condition \hyperlink{A2}{$(A_2)$} holds, then the big bang is the necessary condition for the existence of a consistent estimator for the ancestral state.
\label{them1} 
\end{theorem}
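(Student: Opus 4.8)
The plan is to prove the contrapositive: if the big bang condition fails, then no consistent estimator can exist. Suppose the big bang condition does not hold. Then there is some $s_0 > 0$ for which $|\partial \tree_n(s_0)|$ does not diverge. Because the trees are nested, $\tree_n(s_0)$ is a subtree of $\tree_{n+1}(s_0)$, so the integer sequence $|\partial \tree_n(s_0)|$ is nondecreasing; failing to diverge forces it to be bounded, hence eventually constant, say equal to $m$ for all $n \ge N$. I would first argue that this makes the truncated tree itself stabilize: any strict enlargement of $\tree_n(s_0)$ would create an extra branch reaching distance $s_0$ and thus increase the cutset size, so there is a fixed tree $T^{*}$ with $\tree_n(s_0) = T^{*}$ for every $n \ge N$. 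The cutset $\partial T^{*}$ has exactly $m$ points and separates the root from every leaf of $\tree_n$.

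Next I would exploit this cutset as a bottleneck. Write $\mb Z$ for the vector of trait values on $\partial T^{*}$. Since the trait evolves as a Markov process along the tree and children lineages evolve independently given their parent, the leaf observations $\mb Y_{\tree_n}$ are conditionally independent of the root state $\rho$ given $\mb Z$; that is, $\rho \to \mb Z \to \mb Y_{\tree_n}$ forms a Markov chain whose second kernel does not depend on $\rho$. The data-processing inequality for total variation then yields
\begin{equation*}
d_\TV(P_{\rho_1, \tree_n}, P_{\rho_2, \tree_n}) \le d_\TV(P_{\rho_1, T^{*}}, P_{\rho_2, T^{*}}), \qquad n \ge N,
\end{equation*}
since $P_{\rho, T^{*}}$ is precisely the law of $\mb Z$. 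Now I would apply condition \hyperlink{A2}{$(A_2)$} to the single fixed tree $T^{*}$: there exist $\rho_1 \ne \rho_2$ whose overlapped support under $T^{*}$ is nontrivial, so by \eqref{eq1} we have $d_\TV(P_{\rho_1, T^{*}}, P_{\rho_2, T^{*}}) = 1 - \delta$ for some $\delta > 0$. Crucially the right-hand side above is a single constant independent of $n$, so $d_\TV(P_{\rho_1, \tree_n}, P_{\rho_2, \tree_n}) \le 1 - \delta$ for all $n \ge N$.

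Finally I would convert this uniform separation bound into the nonexistence of a consistent estimator. Suppose toward a contradiction that $\hat\rho_n$ is consistent. Set $\epsilon = |\rho_1 - \rho_2|/2$ and $\mc A_n = \{|\hat\rho_n - \rho_1| \le \epsilon\}$. Consistency gives $P_{\rho_1, \tree_n}(\mc A_n) \goto 1$, while the disjointness of the two balls together with consistency under $\rho_2$ gives $P_{\rho_2, \tree_n}(\mc A_n) \goto 0$. Hence
\begin{equation*}
d_\TV(P_{\rho_1, \tree_n}, P_{\rho_2, \tree_n}) \ge P_{\rho_1, \tree_n}(\mc A_n) - P_{\rho_2, \tree_n}(\mc A_n) \goto 1,
\end{equation*}
contradicting the bound $1-\delta$. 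Therefore no consistent estimator can exist, establishing that the big bang condition is necessary.

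The main obstacle is the second step: I must justify rigorously that the cutset values $\mb Z$ screen off the root from the leaves and that total variation contracts under the resulting $\rho$-free kernel. This rests on the tree Markov property of the evolutionary process and on a clean statement of the data-processing inequality for $d_\TV$. The stabilization of $\tree_n(s_0)$ is the other delicate point, since it is what lets a single pair $(\rho_1,\rho_2)$ supplied by \hyperlink{A2}{$(A_2)$} work simultaneously for all large $n$, making the separation bound uniform. Note that only \hyperlink{A2}{$(A_2)$} and the structural Markov assumption enter here; the moment conditions of \hyperlink{A1}{$(A_1)$} are reserved for the converse, sufficiency direction.
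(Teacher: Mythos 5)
Your proposal is correct and follows essentially the same route as the paper's proof: contrapositive, stabilization of the truncated tree $\tree_n(s_0)$ for large $n$, a data-processing bound $d_\TV(P_{\rho_1,\tree_n},P_{\rho_2,\tree_n}) \le d_\TV(P_{\rho_1,\tree_n(s_0)},P_{\rho_2,\tree_n(s_0)}) < 1$ via the Markov property, condition \hyperlink{A2}{$(A_2)$} applied to the fixed truncated tree, and the contradiction that consistency would force the total variation distance to $1$. The only differences are cosmetic: you invoke the data-processing inequality abstractly where the paper derives it inline by an integral computation, and you justify the eventual constancy of the cutset (via monotonicity from nestedness) slightly more carefully than the paper does.
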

\begin{proof}
	We will use the contra-positive approach: no big bang condition implies no consistent estimator for the ancestral state. 
	That is, we need to prove that if there exists $s_0 > 0$, $n_0\ge1$, and $K>0$ such that  $ \lvert \partial\mathbb{T}_n(s_0) \rvert =K$ for all $n\ge n_0$, then there is no consistent estimator for the ancestral state.
	
	\medskip
	Recall that $P_{\rho, \tree_n}$ is the joint distribution of the observations $\mb Y_{\tree_n}$ at the leaves of tree $\tree_n$ with the ancestral state $\rho$.
	Note that, since $ \lvert \partial\mathbb{T}_n(s_0) \rvert = K$ for all $n\ge n_0$, $\mathbb{T}_n(s_0)$ is a fixed $K$-species star tree with edge lengths all equal to $s_0$ for all $n \ge n_0$. 
	From \hyperlink{A2}{$(A_2)$} the overlapped support of $P_{\rho_1,\mathbb{T}_n(s_0)}$ and $P_{\rho_2, \mathbb{T}_n(s_0)}$ is not trivial for some ancestral state $\rho_1\ne\rho_2$.
	Since this overlapped support is fixed for all $n \geq n_0$, by \eqref{eq1}, we have $d_\TV(P_{\rho_1,\mathbb{T}_n(s_0)}, P_{\rho_2,\mathbb{T}_n(s_0)}) \leq C_0 < 1$ for all $n \geq n_0$.
	
	\medskip
	Let $f_{\mathbb{T} \mid \mathbb{T}(s)}(\cdot \mid \tau)$ be the conditional probability density function of $\mb Y_{\tree}$ given $\mb Y_{\tree(s)} = \tau$.
	Note that  $f_{\mathbb{T} \mid \mathbb{T}(s)}(\cdot \mid \tau)$ does not depend on the ancestral state $\rho$ at the root of $\tree$ due to Markov property.
	We have
	\begin{align}
	&d_\TV(P_{\rho_1,\mathbb{T}_n}, P_{\rho_2,\mathbb{T}_n}) 
	=\frac{1}{2}\int \lvert \ f_{\rho_1, \mathbb{T}_n}(y) - f_{\rho_2, \mathbb{T}_n}(y) \rvert dy \nonumber\\
	&= \frac{1}{2}\int\left \lvert \int \left [ f_{\mathbb{T}_n \mid \mathbb{T}_n(s_0)}(y \mid \tau)f_{\rho_1, \mathbb{T}_n(s_0)}(\tau)  - f_{\mathbb{T}_n \mid \mathbb{T}_n(s_0)}(y \mid \tau)f_{\rho_2, \mathbb{T}_n(s_0)}(\tau) \right ] d\tau \right \rvert dy\nonumber\\
	&=\frac{1}{2}\int\left \lvert \int  f_{\mathbb{T}_n \mid \mathbb{T}_n(s_0)}(y \mid \tau) \left[f_{\rho_1, \mathbb{T}_n(s_0)}(\tau) - f_{\rho_2, \mathbb{T}_n(s_0)}(\tau)\right]d\tau\right \rvert dy\nonumber\\
	&\le \frac{1}{2}\int\int f_{\mathbb{T}_n \mid \mathbb{T}_n(s_0)}(y \mid \tau) \left \lvert f_{\rho_1, \mathbb{T}_n(s_0)}(\tau) - f_{\rho_2, \mathbb{T}_n(s_0)}(\tau) \right \rvert d \tau dy\nonumber\\
	&=\frac{1}{2}\int\left \lvert f_{\rho_1, \mathbb{T}_n(s_0)}(\tau) - f_{\rho_2, \mathbb{T}_n(s_0)}(\tau) \right \rvert \int f_{\mathbb{T}_n \mid \mathbb{T}_n(s_0)}(y \mid \tau) dy d\tau\nonumber \\
	&=\frac{1}{2}\int\left \lvert f_{\rho_1, \mathbb{T}_n(s_0)}(\tau) - f_{\rho_2, \mathbb{T}_n(s_0)}(\tau) \right \rvert d\tau \nonumber\\
	&= d_\TV(P_{\rho_1,\mathbb{T}_n(s_0)}, P_{\rho_2,\mathbb{T}_n(s_0)}) \leq C_0 < 1.
	\label{eq2}
	\end{align}
	Now suppose that there exists a consistent estimator $\hat{\rho}_n$ for the ancestral state $\rho$. Consider the event $\mathcal{A}_\epsilon:=\{\lvert \hat{\rho}_n -\rho_1 \rvert \leq \epsilon\}$ where $\epsilon$ is sufficiently small. 
	Then, we have
	$$P_{\rho_1,\mathbb{T}_n}(\mathcal{A}_\epsilon)\goto1\quad\text{and }\quad P_{\rho_2,\mathbb{T}_n}(\mathcal{A}_\epsilon)\goto 0,$$
	for any $\rho_1\ne \rho_2$. 
	Thus,
	\begin{align*}
	d_\TV(P_{\rho_1,\mathbb{T}_n}, P_{\rho_2,\mathbb{T}_n})=\sup_{\mathcal{B}}\left \lvert P_{\rho_1,\mathbb{T}_n}(\mathcal{B}) - P_{\rho_2,\mathbb{T}_n}(\mathcal{B})\right \rvert
	\ge \left \lvert P_{\rho_1,\mathbb{T}_n}(\mathcal{A_\epsilon}) - P_{\rho_2,\mathbb{T}_n}(\mathcal{A}_\epsilon)\right \rvert
	\goto 1,
	\end{align*}
	which is a contradiction to (\ref{eq2}). Therefore, there is no consistent estimator for the ancestral state in the absence of the big bang condition.\\
\end{proof}

\subsection{Sufficient condition}
\label{sub2}
\begin{theorem}\label{them2}
	Assume that the condition \hyperlink{A1}{$(A_1)$} is satisfied. 
	Then, the big bang condition implies that there exists a consistent estimator for the ancestral state.
\end{theorem}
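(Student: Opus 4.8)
The plan is to work on the transformed scale $\phi$, where condition \hyperlink{A1}{$(A_1)$} supplies an affine conditional mean and a conditional variance that vanishes linearly near the root, and to build an unbiased estimator of $\phi(\rho)$ whose variance can be driven to zero by exploiting the many near-root branchings guaranteed by the big bang condition. Because $\phi$ is a continuous bijection with continuous inverse, the continuous mapping theorem then transfers consistency from $\phi(\rho)$ back to $\rho$ itself.

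First I would define, for each leaf $i$ at distance $t_i$ from the root, the corrected observation $W_i := (\phi(Y_i) - v(t_i))/u(t_i)$. By \eqref{mean} and the Markov property, $\Ef[W_i \mid \rho] = \phi(\rho)$, so every $W_i$ is unbiased for $\phi(\rho)$; and \eqref{var} together with continuity and positivity of $u$ on the compact interval $[0,h^*]$ (set $u_{\min} := \min_{[0,h^*]}u > 0$ and $u_{\max} := \max_{[0,h^*]} u$) yields the uniform bound $\var(W_i\mid\rho) \le Ch^*/u_{\min}^2$. The crux is the covariance: if leaves $i,j$ have most recent common ancestor $a$ at distance $r$ from the root, then conditioning on the state $Y(a)$ and applying \eqref{mean} to each branch gives $\cov(\phi(Y_i),\phi(Y_j)\mid\rho) = u(t_i-r)\,u(t_j-r)\,\var(\phi(Y(a))\mid\rho)$, which by \eqref{var} is at most $u_{\max}^2\,C r$; hence $\cov(W_i,W_j\mid\rho)\le (u_{\max}^2/u_{\min}^2)\,C r$.

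Next I would use the big bang condition to manufacture many weakly correlated copies. Fix $s>0$ and let $c_1,\dots,c_{K}$, with $K=|\partial\tree_n(s)|$, be the cutset points at distance $s$; the pendant subtrees below distinct $c_j$ have disjoint and nonempty leaf sets, so I can select one representative leaf $\ell_j$ (necessarily with $t_{\ell_j}\ge s$) from each. Any two representatives lie in different pendant subtrees, so their common ancestor sits at distance $r < s$, and the bound above gives $\cov(W_{\ell_j},W_{\ell_k}\mid\rho)\le (u_{\max}^2/u_{\min}^2)\,Cs =: g(s)$ for $j\ne k$. Averaging, $\hat m_n^{(s)} := K^{-1}\sum_{j} W_{\ell_j}$ is unbiased for $\phi(\rho)$ and satisfies $\var(\hat m_n^{(s)}\mid\rho) \le Ch^*/(u_{\min}^2 K) + g(s)$. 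Since the big bang condition forces $K=|\partial\tree_n(s)|\to\infty$ for each fixed $s$, the first term vanishes and $\limsup_n \var(\hat m_n^{(s)}\mid\rho)\le g(s)$, where $g(s)\to 0$ as $s\to 0$.

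Finally, a diagonal argument converts this into a genuinely consistent estimator. For each integer $m$ I would pick $N_m$ (increasing) with $|\partial\tree_n(1/m)|\ge m$ for all $n\ge N_m$, set $s_n = 1/m$ on $N_m\le n < N_{m+1}$, and define $\hat\rho_n := \phi^{-1}\!\big(\hat m_n^{(s_n)}\big)$. Then $s_n\to 0$ while the number of representatives tends to infinity, so $\var(\hat m_n^{(s_n)}\mid\rho)\to 0$; combined with unbiasedness and Chebyshev's inequality this gives $\hat m_n^{(s_n)} \to \phi(\rho)$ in probability, and continuity of $\phi^{-1}$ yields $\hat\rho_n \to \rho$ in probability. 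I expect the main obstacle to be the covariance control: one must average at the level of near-root branches rather than over all leaves, since two leaves in the same pendant subtree can be strongly correlated, and it is precisely the tension between the two limits — $|\partial\tree_n(s)|\to\infty$ for fixed $s$ versus the need for $s\to 0$ to kill $g(s)$ — that dictates the diagonal choice of $s_n$.
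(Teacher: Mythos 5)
Your proposal is correct and follows essentially the same route as the paper: both build the bias-corrected average $\frac{1}{K}\sum_j (\phi(Y_{\ell_j})-v(t_{\ell_j}))/u(t_{\ell_j})$ over one leaf per near-root branch point, bound its variance by splitting into individual variances of order $O(1/K)$ plus pairwise covariances controlled via conditioning on the MRCA (giving the $u\cdot u\cdot\var[\phi(Y(a))]\le u_{\max}^2 C r$ bound), and conclude by Chebyshev and continuity of $\phi^{-1}$. Your explicit two-step diagonalization (fix $s$, let $K\to\infty$, then send $s\to 0$ along $s_n=1/m$) is just a more carefully spelled-out version of the paper's choice of a subsequence $n_k$ with $|\partial\tree_{n_k}(1/k)|\ge k$, which fuses the two limits into one.
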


\begin{proof}
We will construct a consistent estimator for the ancestral state $\rho$.
Under big bang condition, there exists an increasing sequence $\{n_k\}_{k=1}^\infty$ such that $ \lvert \partial\mathbb{T}_{n_k}(1/k) \rvert \ge k$ for all $k$.
This implies that there exists a $k$-species subtree of $\mathbb{T}_{n_k}$ such that the distances from the root to all internal nodes are at most $1/k$.
Let $(Y_{i, k})_{i=1}^k$ be the observations at the leaves and $(t_{i,k})_{i=1}^k$ be the distance from the root to the $i$-th leaf of this subtree.
Denote the trait value of the most recent common ancestor of $i$-th, $j$-th leaves of the subtree by $Y_{ij,k}$, and the distance from the root to this ancestor by $t_{ij,k}$.
It is worth noticing that $t_{i, k} \leq h^*$ and $t_{ij,k} \leq 1/k$.
We consider the following estimator for the ancestral state $\rho$:
\[
\hat{\rho} = \phi^{-1} \left (\frac{1}{k} \sum_{i=1}^k{\frac{\phi(Y_{i,k}) - v(t_{i,k})}{u(t_{i,k})}} \right ).
\]
Since $\phi$ is a continuous, injection function, the proposed estimator is well-defined. 
Moreover, $\phi^{-1}$ is a continuous function.
So, to prove that $\hat{\rho}$ is consistent, it is sufficient to show
\[
\frac{1}{k} \sum_{i=1}^k{\frac{\phi(Y_{i,k}) - v(t_{i,k})}{u(t_{i,k})}} \to_p \phi(\rho).
\]
Indeed, we have
\begin{align*}
\Ef \left (\frac{1}{k} \sum_{i=1}^k{\frac{\phi(Y_{i,k}) - v(t_{i,k})}{u(t_{i,k})}} \right ) &= \frac{1}{k} \sum_{i=1}^k{\frac{\Ef[\phi(Y_{i,k})] - v(t_{i,k})}{u(t_{i,k})}} \\
& = \frac{1}{k} \sum_{i=1}^k{\frac{[u(t_{i,k})\phi(\rho) + v(t_{i,k})] - v(t_{i,k})}{u(t_{i,k})}} \\
&= \phi(\rho).
\end{align*}
On the other hand,
\[
\var \left (\frac{1}{k} \sum_{i=1}^k{\frac{\phi(Y_{i,k}) - v(t_{i,k})}{u(t_{i,k})}} \right )  \leq \frac{1}{k^2 m^2}\left ( \sum_{i=1}^k{\var[\phi(Y_{i,k})]} + \sum_{1 \leq i < j \leq k}{\cov[\phi(Y_{i,k}), \phi(Y_{j, k})]} \right )
\]
where $m := \min_{t \in[0,h^*]} u(t) > 0$.
Note that $\var[\phi(Y_{i,k})] \leq C t_{i,k} \leq C h^*$ and
\begin{align*}
\cov[\phi(Y_{i,k}), \phi(Y_{j, k})] &= \cov\{\Ef[\phi(Y_{i,k}) \mid Y_{ij,k}], \Ef[\phi(Y_{j, k}) \mid Y_{ij,k}]\} \\
& \quad + \Ef \{ \cov[\phi(Y_{i,k}), \phi(Y_{j, k}) \mid Y_{ij, k}]\} \\
&= u(t_{ij,k})^2 \var[\phi(Y_{ij,k})]] \leq M^2 C t_{ij,k} \leq \frac{C M^2}{k},
\end{align*}
where $M := \max_{t \in[0,h^*]} u(t) < \infty$.
Therefore
\[
\var \left (\frac{1}{k} \sum_{i=1}^k{\frac{\phi(Y_{i,k}) - v(t_{i,k})}{u(t_{i,k})}} \right ) \leq \frac{C h^*}{k m^2} + \frac{CM^2}{k m^2} \to 0 \quad \text{as } k \to \infty.
\]
For any $\epsilon > 0$, by applying Chebyshev's inequality, we have
\[
\Pr \left ( \left \lvert \frac{1}{k} \sum_{i=1}^k{\frac{\phi(Y_{i,k}) - v(t_{i,k})}{u(t_{i,k})}} - \phi(\rho) \right \rvert > \epsilon \right ) \leq \frac{1}{\epsilon^2} \var \left (\frac{1}{k} \sum_{i=1}^k{\frac{\phi(Y_{i,k}) - v(t_{i,k})}{u(t_{i,k})}} \right ) \to 0.
\]
Hence,
\[
\frac{1}{k} \sum_{i=1}^k{\frac{\phi(Y_{i,k}) - v(t_{i,k})}{u(t_{i,k})}} \to_p \phi(\rho),
\]
which means $\hat \rho$ is consistent.

\end{proof}

We note that the estimator in this proof is sub-optimal since it only uses a subset of the observations.

\section{Applications}
In this section, we will apply our results to several popular trait evolution models including Ornstein-Uhlenbeck (OU), reflected Brownian motion (RBM) model, bounded Brownian motion (BBM) model, and Cox-Ingersoll-Ross (CIR) models.
Since we focus on the ancestral state reconstruction problem, we consider the classical setting where other parameters of the models are known.

\subsection{Ornstein-Uhlenbeck (OU) model}

The OU model assumes that a continuous trait evolves along a phylogeny according to an OU process. 
The process is equipped with a ``selection optimum" parameter $\mu$ which captures the optimal trait value; a ``selection strength" parameter $\alpha$ which represents the strength of the selection force that pulls the trait toward $\mu$; and the variance parameter $\sigma^2$ of the neutral drift.
The model has been used extensively to take into account natural selection in evolutionary studies \citep{beaulieu2012modeling, rohlfs2014modeling, uyeda2014novel, bastide_ho_baele_lemey_suchard_2021}.
Although the consistent property of estimators for $\mu$ and $\alpha$ have been studied thoroughly \citep{ho2013asymptotic, bartoszek2015phylogenetic, ane2017phase}, the consistency of ancestral state reconstruction methods under the OU model is not well-understood.
Here, we will fill in this gap. 
Applying Theorem \ref{thm:main}, we derive the following result:

\begin{theorem}
Under the OU model, the big bang condition is the necessary and sufficient condition for the existence of a consistent ancestral reconstruction method.
\end{theorem}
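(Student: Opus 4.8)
The plan is to deduce this result directly from Theorem~\ref{thm:main}: once we verify that the OU model satisfies the regularity conditions \hyperlink{A1}{$(A_1)$} and \hyperlink{A2}{$(A_2)$}, the conclusion follows immediately. Thus the whole argument reduces to checking these two conditions for the OU transition law, and no new probabilistic machinery is needed beyond the general theorem already established.

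First I would recall that the OU process solving $\dd Y(t) = \alpha(\mu - Y(t))\,\dd t + \sigma\,\dd W(t)$ is time-homogeneous and Markovian, with conditional mean $\Ef[Y(t)\mid Y(0)] = e^{-\alpha t}Y(0) + \mu(1 - e^{-\alpha t})$ and conditional variance $\var[Y(t)\mid Y(0)] = \tfrac{\sigma^2}{2\alpha}(1 - e^{-2\alpha t})$. To match the form required in \eqref{mean}, I would take $\phi$ to be the identity on $\bR$, which is continuous, injective, and has continuous inverse, and then set $u(t) = e^{-\alpha t}$ and $v(t) = \mu(1 - e^{-\alpha t})$. These are continuous, satisfy $u(0)=1$ and $v(0)=0$, and obey $u(t) > 0$ for all $t \ge 0$, so the mean condition \eqref{mean} holds verbatim.

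The key quantitative step is the variance bound \eqref{var}. Using the elementary inequality $1 - e^{-x} \le x$ with $x = 2\alpha t$, I obtain $\var[Y(t)\mid Y(0)] = \tfrac{\sigma^2}{2\alpha}(1 - e^{-2\alpha t}) \le \sigma^2 t$ for every $t$, so \eqref{var} holds with $C = \sigma^2$ uniformly on $[0,h^*]$. This is the part I expect to be the main (though mild) obstacle, since it is the only place where one must produce an explicit linear-in-$t$ control of the variance rather than merely quoting the closed-form transition density.

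Finally, to verify \hyperlink{A2}{$(A_2)$}, I would note that under the OU model the joint law $P_{\rho,\tree}$ of the leaf traits is multivariate Gaussian with full support $\bR^{|\partial\tree|}$ and a strictly positive smooth density $f_{\rho,\tree}$. Consequently, for any $\rho_1 \ne \rho_2$ the two laws share the entire space as common support, so their overlapped support is nontrivial, and the density requirement of \hyperlink{A2}{$(A_2)$} is automatic. With both \hyperlink{A1}{$(A_1)$} and \hyperlink{A2}{$(A_2)$} established, Theorem~\ref{thm:main} yields the claim.
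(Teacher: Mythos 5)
Your proposal is correct and follows essentially the same route as the paper's own proof: verify \hyperlink{A1}{$(A_1)$} with $\phi(y)=y$, $u(t)=e^{-\alpha t}$, $v(t)=\mu(1-e^{-\alpha t})$, $C=\sigma^2$ via the bound $1-e^{-2\alpha t}\le 2\alpha t$, and note that \hyperlink{A2}{$(A_2)$} is trivial because the leaf distributions are multivariate Gaussian with full support, then invoke Theorem~\ref{thm:main}. The only difference is cosmetic: you spell out the elementary inequality that the paper uses implicitly.
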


\begin{proof}
It is sufficient to check the regularity conditions \hyperlink{A1}{$(A_1)$} and \hyperlink{A2}{$(A_2)$} for the OU model.
Let $(Y_t)_{t \ge 0}$ be an OU process, we have
$$\mathbb{E}[Y(t) \mid Y(0)]= e^{-\alpha t}Y(0) + (1 - e^{-\alpha t})\mu,$$
and 
$$\text{Var}[Y(t) \mid Y(0)] = \frac{\sigma^2}{2 \alpha}(1 - e^{-2\alpha t}) \leq \sigma^2 t.$$
The condition \hyperlink{A1}{$(A_1)$} is satisfied with $C = \sigma^2; u(t) = e^{-\alpha t}; v(t) = (1 - e^{-\alpha t})\mu; \phi(y) = y$.
On the other hand, $P_{\rho_1, \tree}$ and $P_{\rho_2, \tree}$ are multivariate normal distributions.
Therefore, the condition \hyperlink{A2}{$(A_2)$} is trivial since the overlapped support is $\mbb{R}^{\lvert \partial \tree \rvert}$.
\end{proof}

\subsection{Reflected Brownian motion (RBM) model}

A limitation of both BM and OU models is that they cannot accommodate hard bounds on trait values.
Unfortunately, hard bounds do exist in nature.
For example, morphological measurements, such as body size and body mass, can only take positive values.
Some trait values are proportion (e.g., allele frequencies and genomic GC content) and thus are bounded between $0$ and $1$. 
\cite{boucher2016inferring} propose the Bounded Brownian motion (BBM) model, which assumes traits evolve under BM with two reflecting boundaries.
A particular case of this model for traits with positive values is the RBM model \citep{boucher2016inferring}.
Recall that if $X(t)$ is a BM starting from $X(0) > 0$, then $Y(t) = \lvert X(t) \rvert$ is a RBM starting from $Y(0) = X(0)$.
That is, the RBM model assumes that traits evolve according to a BM with a single reflecting boundary at zero.
We have the following theorem:

\begin{theorem}
Under the RBM model, the big bang condition is the necessary and sufficient condition for the existence of a consistent ancestral reconstruction method.
\end{theorem}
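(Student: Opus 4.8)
The plan is to invoke Theorem~\ref{thm:main}, so that it suffices to verify the regularity conditions \hyperlink{A1}{$(A_1)$} and \hyperlink{A2}{$(A_2)$} for the RBM model. Write $Y(t) = |X(t)|$, where $X$ is a Brownian motion with variance parameter $\sigma^2$ started at $X(0) = Y(0) > 0$. The naive choice $\phi(y) = y$ fails here: the conditional mean $\Ef[\,|X(t)| \mid X(0) = y\,]$ is the mean of a folded normal, which is nonlinear in $y$, so \eqref{mean} cannot hold. The crux of the argument is therefore to choose a transformation $\phi$ adapted to the reflection, and I would take $\phi(y) = \cosh(\kappa y)$ for a fixed $\kappa > 0$. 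This function is \emph{even}, which is precisely what neutralizes the absolute value.

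For the mean condition, I would exploit evenness to write $\Ef[\cosh(\kappa Y(t)) \mid Y(0) = y] = \Ef[\cosh(\kappa X(t)) \mid X(0) = y]$ and then evaluate the right-hand side using the Gaussian moment generating function of $X(t) \sim N(y, \sigma^2 t)$. A one-line computation gives $\Ef[\cosh(\kappa Y(t)) \mid Y(0) = y] = e^{\kappa^2 \sigma^2 t / 2}\cosh(\kappa y)$, which is of the form \eqref{mean} with $u(t) = e^{\kappa^2 \sigma^2 t / 2} > 0$ and $v(t) \equiv 0$; both are continuous, with $u(0) = 1$ and $v(0) = 0$. Moreover $\phi$ is a continuous strictly increasing bijection from $[0,\infty)$ onto $[1,\infty)$, with continuous inverse $\phi^{-1}(x) = \kappa^{-1}\operatorname{arccosh}(x)$, as \hyperlink{A1}{$(A_1)$} requires.

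For the variance bound, I would use $\cosh^2(\kappa y) = \tfrac12[1 + \cosh(2\kappa y)]$ together with the same moment formula to obtain a closed form for $g(t) := \var[\cosh(\kappa Y(t)) \mid Y(0) = y]$. The resulting expression is a smooth function of $t$ on $[0,h^*]$ with $g(0) = 0$; since any $C^1$ function vanishing at the origin is bounded on $[0,h^*]$ by $\bigl(\max_{[0,h^*]}|g'|\bigr)\,t$, inequality \eqref{var} follows with a constant $C$ that is independent of $t$. I note that the variance bound is applied only at the fixed starting value $\rho$ in the proof of Theorem~\ref{them2}, so no uniformity of $C$ in $Y(0)$ is needed, which is what makes this step harmless.

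Finally, condition \hyperlink{A2}{$(A_2)$} is the easy part. The RBM transition kernel $p_t(x,y) = (2\pi\sigma^2 t)^{-1/2}\bigl[e^{-(y-x)^2/(2\sigma^2 t)} + e^{-(y+x)^2/(2\sigma^2 t)}\bigr]$ is strictly positive for all $x, y > 0$, so integrating over the internal-node values shows that $P_{\rho,\tree}$ admits a density that is positive on the entire orthant $(0,\infty)^{|\partial\tree|}$ for every $\rho > 0$. Consequently, for any $\rho_1 \ne \rho_2$ the two densities share this full-measure support, making the overlap nontrivial. With both conditions verified, Theorem~\ref{thm:main} delivers the claim. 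I expect the recognition that $\phi = \cosh(\kappa\,\cdot)$ is the right transformation---it is the Neumann eigenfunction of the reflected generator---to be the only genuinely nontrivial step; everything else is routine computation.
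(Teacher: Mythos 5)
Your proposal is correct, and it follows the same overall strategy as the paper: invoke Theorem~\ref{thm:main} and verify \hyperlink{A1}{$(A_1)$} by choosing an \emph{even} transformation $\phi$, so that $\phi(Y(t)) = \phi(|X(t)|) = \phi(X(t))$ and the conditional moments reduce to Gaussian computations for the unreflected motion $X$; condition \hyperlink{A2}{$(A_2)$} is then immediate. The only genuine difference is the choice of $\phi$: the paper takes $\phi(y) = y^2$, which is the simplest even choice, while you take $\phi(y) = \cosh(\kappa y)$. The paper's choice yields exact polynomial formulas, namely $\Ef[Y(t)^2 \mid Y(0)] = Y(0)^2 + \sigma^2 t$ (so $u \equiv 1$, $v(t) = \sigma^2 t$) and $\var[Y(t)^2 \mid Y(0)] = 4\sigma^2 t\, Y(0)^2 + 2\sigma^4 t^2 \leq (4\sigma^2 Y(0)^2 + 2\sigma^4 h^*)\, t$, so the constant $C$ in \eqref{var} is explicit and no mean-value-theorem argument is needed. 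Your choice has the aesthetic advantage of being an exact Neumann eigenfunction of the generator (so $v \equiv 0$), at the cost of a slightly less explicit variance bound via the $C^1$ argument, which is nevertheless valid. Note also that the paper's own constant $C = 4\sigma^2 Y(0)^2 + 2\sigma^4 h^*$ depends on $Y(0)$, exactly as yours does, and your remark that this is harmless --- because in the proof of Theorem~\ref{them2} the bound \eqref{var} is only ever applied conditionally on the fixed root value $\rho$ --- is accurate and is a point the paper leaves implicit. Your observation that the naive $\phi(y)=y$ fails (folded-normal mean is nonlinear in $y$) is likewise correct and explains why some nontrivial $\phi$ is required.
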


\begin{proof}
Again, we only need to verify that the RBM model satisfies the regularity conditions \hyperlink{A1}{$(A_1)$} and \hyperlink{A2}{$(A_2)$}.
Note that $X(t) \mid X(0)$ follows a normal distribution $\mc{N}(X(0), \sigma^2 t)$ and its moment generating function is $\psi(s) = \exp(X(0) s + \sigma^2 t s^2 / 2)$.
Therefore,
\begin{align*}
\mathbb{E}[X(t)^2 \mid X(0)] &= \frac{\partial^2 \psi}{(\partial s)^2}(0) = X(0)^2 + \sigma^2 t \\
\mathbb{E}[X(t)^4 \mid X(0)] &= \frac{\partial^4 \psi}{(\partial s)^4}(0) = X(0)^4 + 6 \sigma^2 t X(0)^2 + 3 \sigma^4 t^2.
\end{align*}
Since $Y^2(t) = X^2(t)$ and $Y(0) = X(0)$, we have
$$\mathbb{E}[Y(t)^2 \mid Y(0)] = Y(0)^2 + \sigma^2 t,$$
and 
\begin{align*}
\text{Var}[Y(t)^2 \mid Y(0)] &= \mbb{E}[Y(t)^4 \mid Y(0)] - (\mbb{E}[Y(t)^2 \mid Y(0)])^2 \\
&= 4 \sigma^2 t Y(0)^2 + 2 \sigma^4 t^2 \\
&\leq (4 \sigma^2 Y(0)^2 + 2 \sigma^4 h^*) t, \quad \forall t \in [0, h^*].
\end{align*}
Thus, the condition \hyperlink{A1}{$(A_1)$} is satisfied with $C = 4 \sigma^2 Y(0)^2 + 2 \sigma^4 h^*; u(t) = 1; v(t) = \sigma^2 t; \phi(y) = y^2$.
The condition \hyperlink{A2}{$(A_2)$} is trivial.
\end{proof}


\subsection{Bounded Brownian motion (BBM) model}

The BBM model \citep{boucher2016inferring} assumes that traits evolve under BM with two reflecting boundaries.
For simplicity, we assume that the BM is bounded in $[0,1]$.
We have the following theorem:

\begin{theorem}
Under the BBM model, the big bang condition is the necessary and sufficient condition for the existence of a consistent ancestral reconstruction method.
\label{thm:BBM}
\end{theorem}

\begin{proof}
We will verify the condition \hyperlink{A1}{$(A_1)$} with $C = \sigma^2 \pi^2; u(t) = \exp(- \sigma^2 t \pi^2/2); v(t) = 0; \phi(y) = \cos(\pi y)$.
First, we recall that the density function of $X(t) \mid X(0) = x_0$ is \citep{boucher2016inferring}
\[
p(x, x_0, t) = \frac{1}{\sqrt{2 \pi t} \sigma} \left \{  \sum_{k=-\infty}^{\infty}{\left [  \exp \left ( \frac{- (x - x_0 - 2k)^2}{2 \sigma^2 t} \right ) + \exp \left ( \frac{- (x +x_0 - 2k)^2}{2 \sigma^2 t} \right ) \right ]} \right \}.
\]
Therefore,
\[
\mathbb{E}[\cos(\pi X(t)) \mid X(0)] = \int_0^1{\cos(\pi x) p(x, X(0), t) dx}.
\]
Note that
\[
\int_0^1{\cos(\pi x) \exp \left ( \frac{- (x - x_0 - 2k)^2}{2 \sigma^2 t} \right ) dx} = \int_{-2k}^{-2k +1}{\cos(\pi x) \exp \left ( \frac{- (x - x_0)^2}{2 \sigma^2 t} \right ) dx}
\]
and
\begin{align*}
\int_0^1{\cos(\pi x) \exp \left ( \frac{- (x + x_0 - 2k)^2}{2 \sigma^2 t} \right ) dx} &= \int_{-2k}^{-2k +1}{\cos(\pi x) \exp \left ( \frac{- (x + x_0)^2}{2 \sigma^2 t} \right ) dx} \\
&= \int_{2k - 1}^{2k}{\cos(\pi x) \exp \left ( \frac{- (x - x_0)^2}{2 \sigma^2 t} \right ) dx}
\end{align*}
Hence
\begin{align*}
\mathbb{E}[\cos(\pi X(t)) \mid X(0)] &= \frac{1}{\sqrt{2 \pi t} \sigma} \int_{-\infty}^{\infty}{\cos(\pi x) \exp \left ( \frac{- (x - X(0))^2}{2 \sigma^2 t} \right ) dx} \\
&= \Re \left ( \frac{1}{\sqrt{2 \pi t} \sigma} \int_{-\infty}^{\infty}{e^{i \pi x} \exp \left ( \frac{- (x - X(0))^2}{2 \sigma^2 t} \right ) dx} \right) \\
&= \Re \left ( e^{i \pi X(0) - \sigma^2 t \pi^2/2} \right ) \\
&= \exp(- \sigma^2 t \pi^2/2) \cos(\pi X(0))
\end{align*}
where $\Re(\cdot)$ is the real part.
Similarly, we have
\[
\mathbb{E}[\cos^2(\pi X(t)) \mid X(0)] = \mathbb{E} \left [\frac{1 + \cos(2 \pi X(t))}{2} \mid X(0) \right ] = \frac{1 + \exp(- 2 \sigma^2 t \pi^2) \cos(2 \pi X(0))}{2}.
\]
Thus,
\begin{align*}
\text{Var}[\cos(\pi X(t)) \mid X(0)] &= \mathbb{E}[\cos^2(\pi X(t)) \mid X(0)] - (\mathbb{E}[\cos(\pi X(t)) \mid X(0)])^2 \\
&= \frac{1 + \exp(- 2 \sigma^2 t \pi^2) \cos(2 \pi X(0))}{2} - \exp(- \sigma^2 t \pi^2) \cos^2(\pi X(0)) \\
&= \frac{1}{2}\left ( 1 -  \exp(- \sigma^2 t \pi^2)\right ) \left [1 - \exp(- \sigma^2 t \pi^2) \cos(2 \pi X(0)) \right ] \\
& \leq \frac{1}{2} \left ( 1 -  \exp(- \sigma^2 t \pi^2)\right ) 2 \leq \sigma^2 t \pi^2.
\end{align*}
We conclude that the condition \hyperlink{A1}{$(A_1)$} is satisfied with $C = \sigma^2 \pi^2; u(t) = \exp(- \sigma^2 t \pi^2/2); v(t) = 0; \phi(y) = \cos(\pi y)$. The condition \hyperlink{A2}{$(A_2)$} is trivial.
\end{proof}

\subsection{Cox-Ingersoll-Ross (CIR) model}

The CIR model is an evolution model for traits that have positive values.
This model has been utilized for modelling evolutionary rate \citep{lepage2006continuous}, longevity of carnivores and ungulates \citep{blomberg2020beyond}, and rate of adaptive trait evolution \citep{jhwueng2020modeling}.
Similar to the OU process, the CIR process has a ``selection optimum" parameter $\mu$ and a ``selection strength" parameter $\alpha$.
Let $Y(t)$ be a CIR process, then $Y(t)$ follows the following stochastic differential equation:
\[
dY(t) = \alpha (\mu - Y(t)) dt + \sigma \sqrt{Y(t)} dB(t),
\]
where $B(t)$ is the standard BM.
We have
$$\mathbb{E}[Y(t) \mid Y(0)] = Y(0) e^{-\alpha t} + \mu (1 - e^{-\alpha t}),$$
and 
\begin{align*}
\text{Var}[Y(t) \mid Y(0)] &= Y(0) \frac{\sigma^2}{\alpha}(e^{-\alpha t} - e^{-2 \alpha t}) + \frac{\mu \sigma^2}{2 \alpha} (1 - e^{- \alpha t})^2 \\
& \leq [Y(0) + \mu/2] \frac{\sigma^2}{\alpha} (1 - e^{- \alpha t}) \\
&\leq [Y(0) + \mu/2] \sigma^2 t, \quad \forall t \in [0, h^*].
\end{align*}
Hence, the condition \hyperlink{A1}{$(A_1)$} holds with $C = [Y(0) + \mu/2] \sigma^2; u(t) = e^{-\alpha t}; v(t) = \mu (1 - e^{-\alpha t}); \phi(y) = y$.
Again, the condition \hyperlink{A2}{$(A_2)$} is trivial.
By Theorem \ref{thm:main}, we have:
\begin{theorem}
Under the CIR model, the big bang condition is the necessary and sufficient condition for the existence of a consistent ancestral reconstruction method.
\end{theorem}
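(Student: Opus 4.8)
The plan is to reduce the statement to the master result, Theorem~\ref{thm:main}: since that theorem already shows the big bang condition is necessary and sufficient whenever $(A_1)$ and $(A_2)$ hold, it suffices to verify these two regularity conditions for the CIR model, after which the conclusion is immediate.

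First I would establish $(A_1)$. Reading off the conditional mean $\mathbb{E}[Y(t)\mid Y(0)] = Y(0)e^{-\alpha t} + \mu(1-e^{-\alpha t})$, I would take $\phi(y)=y$, $u(t)=e^{-\alpha t}$, and $v(t)=\mu(1-e^{-\alpha t})$, so that \eqref{mean} holds with these continuous functions; here $\phi$ is its own continuous inverse, $u(t)>0$ for every $t$, and $u(0)=1$, $v(0)=0$ as required. The substantive part is the variance estimate \eqref{var}. Starting from the closed form $\var[Y(t)\mid Y(0)] = Y(0)\frac{\sigma^2}{\alpha}(e^{-\alpha t}-e^{-2\alpha t}) + \frac{\mu\sigma^2}{2\alpha}(1-e^{-\alpha t})^2$, I would control each term on $[0,h^*]$ via $e^{-\alpha t}-e^{-2\alpha t}=e^{-\alpha t}(1-e^{-\alpha t})\le 1-e^{-\alpha t}$ and $(1-e^{-\alpha t})^2\le 1-e^{-\alpha t}$, collapsing both contributions into a multiple of $1-e^{-\alpha t}$, and then apply $1-e^{-\alpha t}\le \alpha t$ to arrive at the linear bound with constant $C=[Y(0)+\mu/2]\sigma^2$ independent of $t$.

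Next I would check $(A_2)$. Because the CIR process started from a positive value has a transition law (a scaled noncentral chi-squared distribution) whose density is supported on all of $(0,\infty)$, the leaf vector $\mathbf{Y}_{\tree}$ under two distinct positive ancestral states $\rho_1\ne\rho_2$ produces densities $f_{\rho_1,\tree}$ and $f_{\rho_2,\tree}$ with a common full support, so any overlapped support carries positive mass under both $P_{\rho_1,\tree}$ and $P_{\rho_2,\tree}$ and is therefore nontrivial; the existence of the density $f_{\rho,\tree}$ is automatic from the same fact. With $(A_1)$ and $(A_2)$ confirmed, Theorem~\ref{thm:main} applies directly and delivers the claim.

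I expect the only real obstacle to be the variance bound: the conditional variance depends explicitly on $Y(0)$, and one must confirm that the constant $C$ can be chosen free of $t$ while still dominating the expression \emph{uniformly} on the bounded interval $[0,h^*]$. This is precisely where boundedness of the tree height $h^*$ is used, through the elementary inequality $1-e^{-\alpha t}\le \alpha t$ together with the relaxation of the $\mu$-term's factor $(1-e^{-\alpha t})^2$ to first order; the remaining manipulations are routine and match the constants already displayed in the text.
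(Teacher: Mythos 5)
Your proposal is correct and follows essentially the same route as the paper: it verifies $(A_1)$ with the identical choices $\phi(y)=y$, $u(t)=e^{-\alpha t}$, $v(t)=\mu(1-e^{-\alpha t})$, $C=[Y(0)+\mu/2]\sigma^2$, via the same inequality chain bounding both variance terms by a multiple of $1-e^{-\alpha t}\le \alpha t$, and then invokes Theorem~\ref{thm:main}. Your only addition is spelling out why $(A_2)$ holds (common full support of the noncentral chi-squared transition densities on $(0,\infty)$), a step the paper simply labels trivial.
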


\subsection{Simulation under the bounded Brownian motion (BBM) model}
\label{sec:sim}

In this section, we will use simulations to illustrate the result in Theorem \ref{thm:BBM}.
Specifically, we consider two sequences of bifurcating ultrametric caterpillar trees.
Recall that the distance from the root to the leaves of an ultrametric tree is constant and a caterpillar tree has a path, called the main path, that contains all internal nodes.
For both sequences, we create the $n$-th tree by adding a new leaf directly to the main path of the $(n-1)$-th tree such that the distance from the $n$-th leaf to the root is $1$.
The first sequence of trees satisfies the big bang condition: the $n$-th leaf attaches to the tree at the $n$-th internal node, whose distance to the root is $1/n$ (see Figure \ref{fig:caterpillartrees} -- left).
On the other hand, the second sequence does not satisfy the big bang condition: the $n$-th leaf attaches to the tree at the $n$-th internal node, whose distance to the root is $1 - 1/n$ (see Figure \ref{fig:caterpillartrees} -- right).

\begin{figure}[h]
    \centering
    \includegraphics[width=0.49\textwidth]{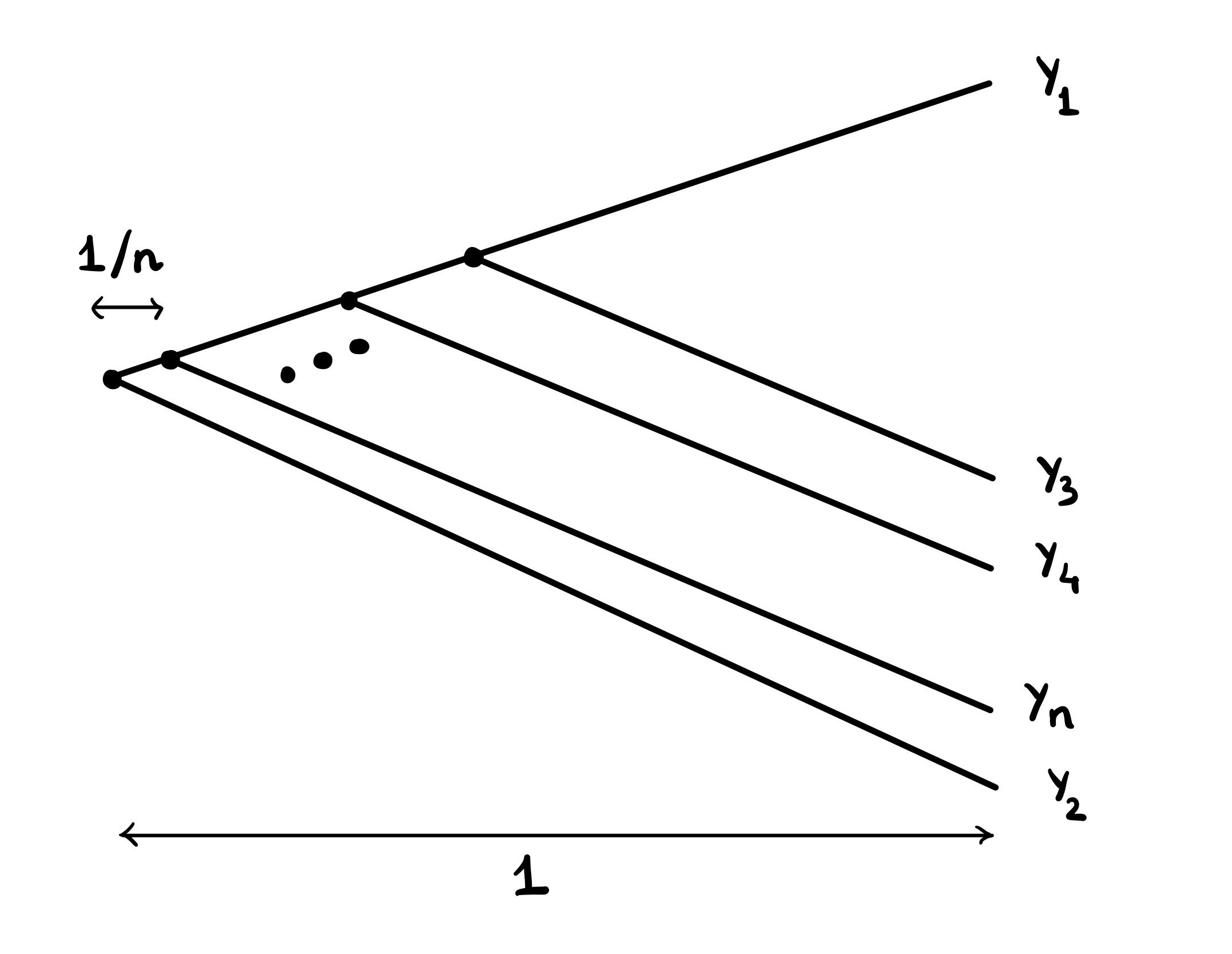}
    \includegraphics[width=0.49\textwidth]{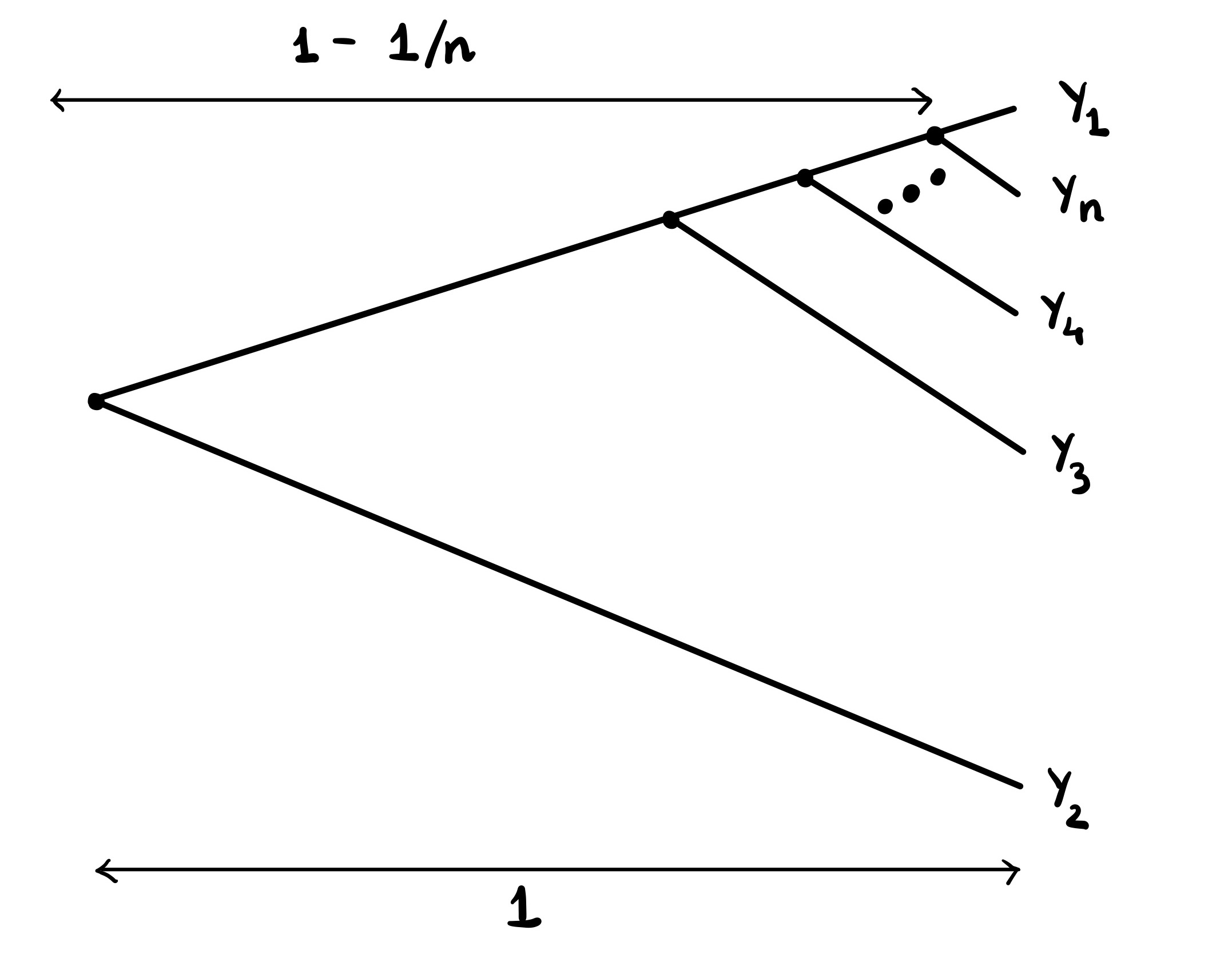}
    \caption{Two sequences of caterpillar trees consider in the simulation. Left: the big bang condition is satisfied (new branch is closer and closer to the root). Right: the big bang condition is not satisfied (all branches are far away from the root). }
    \label{fig:caterpillartrees}
\end{figure}

In this simulation, we use the \texttt{R} functions provided in \citet{boucher2016inferring} for simulating and fitting under the BBM model \footnote{https://github.com/fcboucher/BBM}.
We focus on trees with size $n = 10, 100, 1000$. 
So, we have $6$ trees in total ($3$ trees for each sequence).
For each tree, we simulate the trait values at the leaves $1000$ times under the BBM model with the ancestral state $\rho = 0$, variance $\sigma^2 = 1/2$, and two bounds $\pm 1$ using the \texttt{R} function \texttt{Sim\_BBM}.
Then, we use the function \texttt{fit\_BBM\_model\_uncertainty} to fit the BBM model and return the MLE of the ancestral state $\rho$.
When the big bang condition holds, the estimate $\hat \rho$ is more precise as the number of species increases (see Figure \ref{fig:boxplots} -- left).
On the other hand, when the big bang condition is not satisfied, the precision of $\hat{\rho}$ stays the same for all trees.

\begin{figure}[h]
    \centering
    \includegraphics[width=0.45\textwidth]{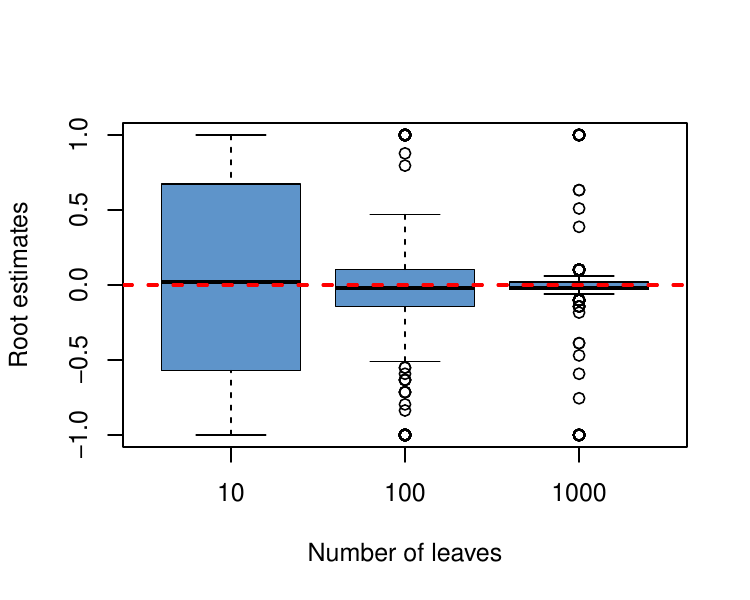}
    \includegraphics[width=0.45\textwidth]{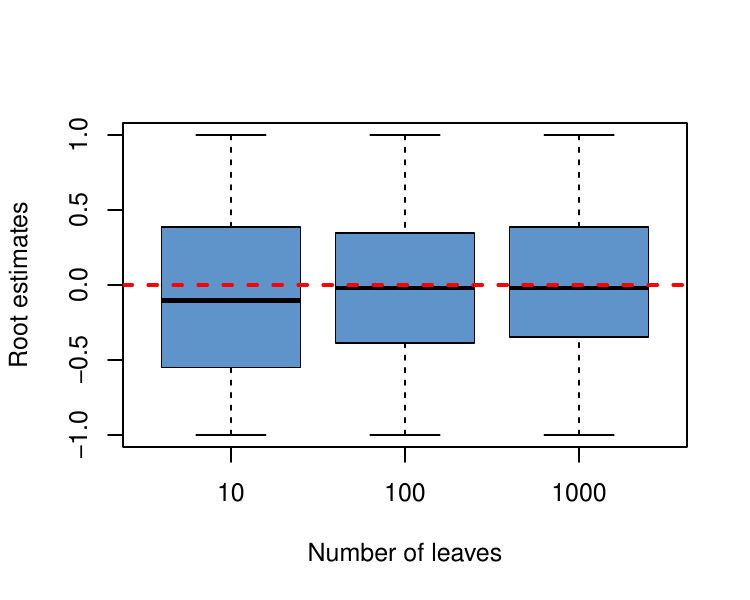}
    \caption{Estimates of the ancestral state $\rho$ under the BBM with $\rho = 0$, $\sigma^2 = 1/2$, and two bounds $\pm 1$. Left (the big bang condition holds):  the estimates become more accurate as the number of leaves increases. Right (the big bang condition does not hold): the accuracy is the same for all trees.}
    \label{fig:boxplots}
\end{figure}

\section{Discussion and Conclusion}

In this paper, we prove that under some regularity conditions, the big bang condition is a necessary and sufficient condition for the existence of a consistent ancestral state reconstruction method for continuous traits.
We verify these conditions for Ornstein-Uhlenbeck, reflected Brownian motion, bounded Brownian motion and Cox-Ingersoll-Ross models.

It is worth noticing that under the BM model, the MLE for the ancestral state is consistent if and only if the big bang condition holds \citep{ho2021can}.
However, it is unclear if this result is still true beyond BM models.
Since MLE is the most popular method for reconstructing the ancestral state, studying its consistency property is of great interest.

The results in this paper assume that both the tree topology and branch lengths are known.
However, we may know the tree topology but not branch lengths in practice.
Many popular tree reconstruction methods such as maximum parsimony, neighbour-joining, and quartet puzzling only return the tree topology.
A recent study shows that for discrete traits, the big bang condition may not guarantee the existence of a consistent ancestral state reconstruction method when branch lengths are unknown \citep{ho2022ancestral}.
An open question is whether the big bang condition is still a sufficient condition for the existence of a consistent estimator for the ancestral state of continuous traits.

\section*{Data Availability Statements}

The R code for the simulations is available at \url{https://github.com/lamho86/When-can-we-reconstruct-the-ancestral-state-Beyond-Brownian-motion}.

\section*{Conflict of interest}
The authors have no competing interests to declare.

\section*{Acknowledgement}

LSTH was supported by the Canada Research Chairs program, the NSERC Discovery Grant RGPIN-2018-05447, and the NSERC Discovery Launch Supplement DGECR-2018-00181. VD was supported by a startup fund from the University of Delaware and National Science Foundation grant DMS-1951474.

\bibliographystyle{chicago}

\bibliography{references}
\end{document}